\newcommand{\pushright}[1]{\ifmeasuring@#1\else\omit\hfill$\displaystyle#1$\fi\ignorespaces}
\def\begquo{\begin{quote}}
	\def\endquo{\end{quote}}
\def\begequarr{\begin{eqnarray}}
	\def\endequarr{\end{eqnarray}}
\def\begequarrs{\begin{eqnarray*}}
	\def\endequarrs{\end{eqnarray*}}
\def\begarr{\begin{array}}
	\def\endarr{\end{array}}
\def\begequ{\begin{equation}}
	\def\endequ{\end{equation}}
\def\lab{\label}
\def\begdes{\begin{description}}
	\def\enddes{\end{description}}
\def\begenu{\begin{enumerate}}
	\def\begite{\begin{itemize}}
		\def\endite{\end{itemize}}
	\def\endenu{\end{enumerate}}
\def\lef[{\left[\begin{array}}
	\def\rig]{\end{array}\right]}
\def\begcen{\begin{center}}
	\def\endcen{\end{center}}
\def\begrem{\begin{remark}\rm}
	\def\endrem{\end{remark}}
\def\begdef{\begin{definition}}
	\def\enddef{\end{definition}}
\def\begpro{\begin{proposition}}
	\def\endpro{\end{proposition}}
\def\begfac{\begin{fact}}
	\def\endfac{\end{fact}}
\def\begass{\begin{assumption}}
	\def\endass{\end{assumption}}
\def\begsubequ{\begin{subequations}}
	\def\endsubequ{\end{subequations}}
\def\begmat#1{\begin{bmatrix}#1\end{bmatrix}}
\def\begali#1{\begin{align}{#1}\end{align}}
\def\begalis#1{\begin{align*}{#1}\end{align*}}
\def\calg{{\cal G}}
\def\calc{{\cal C}}
\def\calr{{\cal R}}
\def\calj{{\cal J}}
\def\calw{{\cal W}}
\def\call{{\cal L}}
\def\calf{{\cal F}}
\def\calj{{\cal J}}
\def\caly{{\cal Y}}
\def\tilthe{\tilde{\theta}}
\def\dottilthe{\dot{\tilde{\theta}}}
\def\liminf{\lim_{t \to \infty}}
\def\L2e{{\cal L}_{2e}}
\def\rea{\mathds{R}}
\def\adj{\mbox{adj}}
\def\col{\mbox{col}}
\def\hal{{1 \over 2}}
\def\TAC{{\it IEEE Trans. Automatic Control}}
\def\IJC{{\it International Journal of Control}}
\def\SCL{{\it Systems and Control Letters}}
\def\AUT{{\it Automatica}}
\def\calg{{\cal G}}
\def\hal{{1 \over 2}}
\def\calp{{\mathfrak p}}
\begin{document}

\title{Indirect Adaptive Control of Nonlinearly Parameterized Nonlinear Dissipative Systems.}

\author[1]{Romeo Ortega}

\author[1]{Rafael Cisneros*}

\author[2]{Lei Wang}

\author[3]{Arjan van der Schaft}

\authormark{R. Ortega \textsc{et al}}

\address[1]{\orgdiv{Departamento Académico de Sistemas Digitales}, \orgname{ITAM}, \orgaddress{\state{Ciudad de M\'exico}, \country{M\'exico}}}

\address[2]{\orgdiv{Australian Center for Field Robotics}, \orgname{The University of Sydney}, \orgaddress{\state{Sydney}, \country{Australia}}}

\address[3]{\orgdiv{Bernoulli Institute for Mathematics, Computer Science and AI}, \orgname{University of Groningen}, \orgaddress{\state{Groningen}, \country{The Netherlands}}}

\corres{*R. Cisneros, Departamento Académico de Sistemas Digitales, ITAM, Río Hondo 1, 01080 Ciudad de M\'exico, M\'exico. \email{rcisneros@itam.mx}}

%\presentaddress{This is sample for present address text this is sample for present address text}

\abstract[Summary]{In this note we address the problem of indirect adaptive (regulation or tracking) control of nonlinear, input affine {\textit {dissipative}} systems. It is assumed that the supply rate, the storage and the internal dissipation functions may be expressed as \textit{ nonlinearly parameterized} regression equations where the mappings (depending on the unknown parameters) satisfy a \textit{ monotonicity} condition---this encompasses a large class of physical systems, including passive systems. We propose to estimate the system parameters using the ``power-balance'' equation, which is the differential version of the classical dissipation inequality, with a new estimator that ensures global, exponential, parameter  convergence under the very weak assumption of \textit{interval excitation} of  the power-balance equation regressor.   To design the indirect adaptive controller we make the standard assumption of existence of an asymptotically stabilizing controller that depends---possibly nonlinearly---on the unknown plant parameters, and apply a certainty-equivalent control law.  The benefits of the proposed approach, with respect to other existing solutions, are illustrated with examples.}

\keywords{Adaptive Control, Dissipative Systems, Nonlinear Systems}

\maketitle

%
%%%%%%%%%%%%%%%5
\section{Introduction}
\lab{sec1}
%%%%%%%%%%%%%%%5
%
The problem of adaptive control of nonlinear systems has attracted the attention of researchers for several years, see \cite{ASTKARORT,KRSKANKOKbook,MARTOMbook,PRAetal} for a survey of the literature.  The development of \textit{ direct} adaptive controllers, where we estimate directly the parameters of a full information stabilizing controllers, is stymied by the so-called matching condition \cite[Section 3.3]{ASTKARORT}, which imposes severe constraints on the class of systems for which it is applicable. This obstacle is avoided for the case of systems with particular (triangular) structures \citep{KRSKANKOKbook}, but this structural assumption is just mathematically motivated and is rarely verified in physical systems. Although it is possible, in some cases,  to transform a general nonlinear system into a triangular one this requires the solution of a partial differential equations, which is difficult to find. For this reason, must of the recent attention has been centered on \textit{indirect} adaptive controllers, where we estimate the parameters of the plant and then compute the parameters of the controller.  

The implementation of indirect adaptive  controllers is, in general, complex and very computationally demanding. This is mainly due to the fact that the parameterization that is used to obtain the linear regression equation (LRE) needed for their implementation is based on the state space model of the system dynamics,  which involves complicated signal and parameter relations that are unrelated with the physical properties of the system. An additional difficulty is that, in order to obtain a \textit{linear} relation in the LRE, it is often necessary to \textit{overparametrize} the vector of unknown parameters. This approach has very serious shortcomings, in particular the need of more \textit{stringent excitation conditions} stemming from the fact that the parameter search takes place in a bigger dimensional space with nonunique minimizing solutions---see  \cite{LJUbook,SASBODbook} and the detailed discussion in  \cite[Section 1]{ORTetal}. This situation has severely stymied the practical implementation of  adaptive control techniques in many critical applications. 

A route, pursued by some control researchers, to overcome this difficulty is to replace the complicated expression of the regressor using function approximation,  for instance neural networks or fuzzy controllers.  Unfortunately, as always with function approximation-based techniques \cite{ORT}, although they might lead to successful designs, there is no solid theoretical guarantee that  the procedure will work. A second alternative is to use high-gain based schemes, like sliding modes or fractional power controllers that---as is well-known \cite{ARAetalijc}---suffer from their extreme sensitivity to the unavoidable presence of noise in the system, rendering them unfeasible in must practical applications.

For robotic applications it was suggested in  \cite[Section 2.2]{SLOLIaut} to use the parameterization of the \textit{power-balance equation} to design an indirect adaptive controller.  In an independent line of research,  in \cite[Subsection 12.6.2]{KHADOMbook} this parameterization  was used for the \textit{identification} of the robot parameters. The main advantage of this approach is that the resulting parameterization avoids the cumbersome terms related to the Coriolis and centrifugal forces matrix. This is a significant simplification that drastically reduces the complexity and computational demands.  To the best of our knowledge, such an approach was never actually pursued, because the \textit{excitation requirements} for the consistent estimation of the parameters is ``very high"---see \cite[Remark 16]{ORTetal}. In the recent paper  \cite{ROMORTBOB} a procedure to overcome, for the first time, this fundamental problem was proposed. Towards this end, a recent technique of generation of new LRE with ``exciting" regressors developed in \cite{BOBetal} was used. 

The main contributions of the paper, which include several generalizations of the results in   \cite{ORTetal,ROMORTBOB}, may be summarized as follows. 
\begenu[{\bf C1}]
\item As indicated above most of the research on adaptive control has relied on the use of LREs, which are usually obtained overparameterizing the  regression equations. In contrast with this approach we consider here the case where the uncertain parameters enter into the system dynamics in a \textit{nonlinear} way and construct a \textit{ nonlinearly  parameterized regression equation} (NLPRE). The interested reader is referred to \cite{BOFSLO,ORTetal} for recent reviews of the literature dealing with NLPRE. 
\item In contrast to the results in   \cite{ORTetal,ROMORTBOB} where, to prove  \textit{parameter convergence}, it is necessary to assume some \textit{a priori non-verifiable}  conditions, in this paper we use the parameter estimator proposed in \cite{WANetal}---called G+D estimator---that ensures (global exponential) parameter convergence  assuming only the extremely weak condition of \textit{interval excitation} \cite{KRERIE,TAObook} of the original vector regressor. An additional advantage of the G+D estimator is that it can deal with a class of NLPREs---in particular, separable ones.
\item We extend the technique, restricted in   \cite{ROMORTBOB} to Euler-Lagrange systems, to the much broader class of \textit{dissipative} systems \cite{MOYbook,VANbook}, which contains as a particular case Euler-Lagrange and port-Hamiltonian systems.
\item The use of the power balance equation, instead of the full dynamics of the system, to obtain the parameterization needed for an indirect adaptive control implementation, yields a significant \textit{numerical complexity reduction}. The impact of such a simplification in the practical feasibility and the excitation requirements of the scheme can hardly be overestimated.  
\endenu

The remainder of the paper is organized as follows. In Section \ref{sec2} we identify the class of systems for which our adpative control result is applicable and present the problem formulation.  In  Section \ref{sec3} we present the derivation of a NLPRE for the estimation of the unknown plant parameters proceeding from the   dissipation inequality. We also recall in this section  the standard parameterization  that imposes the, rarely verified,  assumption of linearity in the parameters of the system dynamics.  Section \ref{sec4} presents the proposed adaptive control scheme. Simulation results, which illustrate the {performance} of the proposed controller  are presented in Section \ref{sec5}. The paper is wrapped-up with concluding remarks in  Section \ref{sec6}.\\

\noindent {\bf Notation.} $I_n$ is the $n \times n$ identity matrix.  For $x \in \rea^n$, we denote the square of the Euclidean norm as $|x|^2:=x^\top x$. The action of an operator $\calf:\call_{\infty e} \to \call_{\infty e}$ on a signal $u(t)$ is denoted as $\calf[u]$. In particular, we define the derivative operator $\calp[u]=:{du(t)\over dt}$. All mappings are assumed smooth and all dynamical systems are assumed to be forward complete. Given a function $f:  \rea^n \to \rea$ we define the differential operator $\nabla f:=\left(\frac{\displaystyle \partial f }{\displaystyle \partial x}\right)^\top$.  To simplify the notation, whenever clear from the context, the arguments of the various functions are omitted.
%
%%%%%%%%%%%%%%%5
\section{Problem Formulation}
\lab{sec2}
%%%%%%%%%%%%%%%5
%
\subsection{Plant description}
\lab{subsec21}
%%%%%%%%%%%%%%%5
Consider the input affine nonlinear system
\begin{align}
	\nonumber
	\dot x &= f(x,\theta)+g(x,\theta)u_p\\
	y_p & = h(x,\theta)+j(x,\theta)u_p,
	\lab{sys}
\end{align}
where $x(t)\in\mathbb{R}^{n}$ is the systems state $u_p(t)\in\mathbb{R}^{n_p}$ and $y_p(t)\in\mathbb{R}^{n_p}$ are the port variables,  $\theta\in\mathbb{R}^q$ is a vector  of \textit{unknown}  parameters and 
\begalis{
	& f:\rea^n \times \rea^q \to \rea^n,\; g:\rea^n \times \rea^q \to \rea^{n \times n_p} \\
	& h:\rea^n \times \rea^q \to \rea^{n_p},\, j:\rea^n \times \rea^q \to \rea^{n_p\times n_p}.
}
To be able to treat systems with external sources, the vector $u_p$ is assumed to be of the form $u_p=\col(u,E)$, where $u(t)\in\mathbb{R}^{m}$ are the control signals and $E(t)\in\mathbb{R}^{n_p-m}$ are external signals that represent uncontrollable sources (or loads). \\
\subsection{Assumptions}
\lab{subsec22}
%%%%%%%%%%%%%%%5
We make the following assumptions on the system.

\begenu
\item[{\bf A1}]  [Measurements] The systems state $x$ is \textit{measurable}.

\item[{\bf A2}]  [Dissipativity]  The system \eqref{sys} is \textit{dissipative}  \cite[Definition 3.1.2]{VANbook} with respect to the supply rate $s:\mathbb{R}^{n_p}\times\mathbb{R}^{n_p}\times \mathbb{R}^{q} \to\mathbb{R}$, \textit{i.e.}, there exist a storage function ${S}:\mathbb{R}^{n}\times \mathbb{R}^{q}\to\mathbb{R}_+$  and  an internal dissipation function $d:\mathbb{R}^{n}\times \mathbb{R}^{q}\to\mathbb{R}_+$ such that\
\begequ
\lab{powbal}
\dot S(x,\theta)= -d(x,\theta)+ s(u_p,y_p,\theta).
\endequ
\item[{\bf A3}]  [Parameterization] The functions $s$,  ${S}$ and $d$ admit the following \textit{separable} NLPRE representation 
\begsubequ
\lab{linpar}
\begin{align}
	\label{linpars}
	{s}(u_p,y_p,\theta)&=\phi_s^\top(u_p,y_p)\calg_s(\theta) + b_s(u_p,y_p)\\
	\label{linparS}
	{S}(x,\theta)&=\phi_S^\top(x)\calg_S(\theta) + b_S(x)\\
	\label{linpard}
	d(x,\theta)&=\phi^\top_d(x)\calg_d(\theta) + b_d(x),
\end{align}
\endsubequ
where the functions 
\begalis{
	& \phi_s:\mathbb{R}^{n_p}\times\mathbb{R}^{n_p} \to \rea^{p_s},\;\phi_S:\rea^n \to \rea^{p_S},\;\phi_d:\rea^n \to \rea^{p_d}\\
	& b_s:\mathbb{R}^{n_p}\times\mathbb{R}^{n_p} \to \rea,\;b_S:\rea^n \to \rea,\;b_d:\rea^n \to \rea
} 
and the mappings $\calg_s:\rea^q \to \rea^{p_s}$,  $\calg_S:\rea^q \to \rea^{p_S}$ and  $\calg_d:\rea^q \to \rea^{p_d}$ are \textit{known} and 
$$
p:=p_s+p_S+p_d \geq q.
$$ 
\item[{\bf A4}]  [Monotonicity] Define the mappings $\calg:\rea^q \to \rea^{p} $
\begequ
\lab{calg}
\calg(\theta):=\begmat{\calg_s(\theta) \\  \calg_S(\theta) \\ \calg_d(\theta)}
\endequ
and $\calw:\rea^q \to \rea^{q}$
$$
\calw(\theta):=T\calg(\theta),
$$
where $T\in\mathbb{R}^{q\times p}$ is \textit{chosen by the designer}. Assume there is a positive definite matrix $P\in\mathbb{R}^{q\times q}$ such that $\calw(\theta)$ is \textit{strongly $P$-monotonic} \cite{DEM,PAVetal}. That is,
\begali{
	\lab{monpro-S}
	(a-b)^\top P &\left[\calw(a) - \calw(b)\right] \geq \rho|a-b|^2 >  0,\;\forall \; a,b \in \rea^q,
}
with $a \neq b$ and for some $\rho {>0}$.
\item[{\bf A5}]  [Stabilizability] Given a desired bounded trajectory for the state vector $x_\star(t) \in \rea^{n}$, with bounded derivative. Define the state tracking error $\tilde x:=x-x_\star.$ There exists a mapping $\beta: \rea^{n} \times \rea^q \times \rea_{\ge 0} \to \rea^m$, such that the closed-loop system
$$
\dot x= f(x,\theta)+g(x,\theta)\begmat{ \beta(x,\theta,t) \\ E}
$$
has an \textit{error dynamics} 
\begequ
\lab{errsys}
\dot {\tilde x}=f_\star(\tilde x,\theta,E,t),
\endequ
whose origin is uniformly asymptotically stable (UAS). 
\endenu
\subsection{Control objective}
\lab{subsec23}
%%%%%%%%%%%%%%%5
Design an  estimator of the plant parameters $\theta$ 
\begali{
	\nonumber
	\dot {x}_\theta & =f_\theta(x_\theta,\tilde x,t)\\
	\lab{paa}
	\hat \theta & =h_\theta(x_\theta,\tilde x,t),
}
where $f_\theta: \rea^{n_\theta} \times \rea^n \times \rea_{\ge 0} \to \rea^{n_\theta}$ and $h_\theta:  \rea^{n_\theta} \times\rea^{n}  \times \rea_{\ge 0} \to \rea^q$ which ensures \textit{global, exponential convergence} of the parameter errors $\tilde \theta:=\hat \theta-\theta$ and such that the (certainty-equivalent) indirect adaptive control $u=\beta(x,\hat \theta,t)$ ensures UAS of the zero equilibrium of the adaptively controlled error system
\begali{
	\dot {\tilde x} & =f_\star(\tilde x,\theta,E,t)+g(\tilde x+x_\star,\theta)\begmat{ \beta(\tilde x+x_\star,\tilde \theta+\theta,t)  -\beta(\tilde x+x_\star,\theta,t)\\ 0}.
	\lab{errsysada}
}
Consequently,  
\begequ
\lab{asycon}
\liminf \tilde x(t) = 0,
\endequ
with all signals bounded provided the initial errors $\tilde x(0)$ are sufficiently small. 
\subsection{Discussion}
\lab{subsec24}
%%%%%%%%%%%%%%%5
The following remarks are in order.\\

\noindent {\bf P1}
We have restricted ourselves to  a \textit{local} stabilization objective using static---versus dynamic---state-feedback controllers. As will become clear below, the extension to the case where the controller is \textit{dynamic}  follows \textit{verbatim}. However, to obtain \textit{global} results in tracking it is necessary to strengthen the UAS Assumption {\bf A5} to global \textit{exponential} stability.  This additional assumption is not necessary in \textit{regulation} tasks with static state-feedback controllers for which  global \textit{asymptotic} stabilization of the known parameter controller is sufficient for global stabilization of the adaptive one.  See \cite[Remark 12]{ORTetal} for a discussion on this issue.\\

\noindent {\bf P2}
A convenient representation of the supply rate $s$ is the one corresponding to the so-called \textit{(QSR) dissipativity} \cite{MOYbook} that is given as
$$
s(u_p,y_p)=y^\top _pQy_p+2y^\top _pSu_p+u_p^\top Ru_p,
$$ 
where $Q, S$ and $R$ are $n_p \times n_p$ constant matrices, with $Q$ and  $R$ symmetric. It is clear that the properties of passivity and finite $\call_2$-gain are particular cases of (Q,S,R) dissipativity \cite{MOYbook,VANbook}. In reference to the first property, and with some abuse of notation, we will refer to \eqref{powbal} as \textit{``power-balance'' equation}, which is the differential version of the classical dissipation inequality including the dissipation function $d$. \\

\noindent {\bf P3}
As is well-known \cite{ASTKARORT,SASBODbook} the difference between direct and indirect adaptive controllers is that, in the former, it is assumed that there exists a parameter-dependent controller that achieves the control objective, while in the latter we additionally assume that the plant depends on some unknown parameters and that there exists a mapping between the plant and the controller parameters that allows us to compute the control signal.  In direct adaptive control we estimate directly the parameters of the controller. On the other hand, in its indirect version the parameters of the plant are estimated and then the parameters of the controller are computed via the aforementioned mapping.  To simplify the presentation in the problem formulation above we have obviated this latter step---embedding the mapping between the plant and controller parameters in the function $\beta$.\\

\noindent {\bf P4}
The Assumption {\bf A4} of  strong $P$-monotonicity of  $\calw(\theta)$ is similar to the one used in \cite{ORTetal},  The form adopted here was first proposed in \cite{WANetal}. As indicated in  \cite{DEM,ORTetal,PAVetal} it is possible to verify the monotonicity assumption from the Jacobian of $\calw(\theta)$ invoking  \cite[Lemma 1]{ORTetal}.

%%%%%%%%%%%%%%%5
\section{Derivation of the New Systems Parameterization}
\lab{sec3}
%%%%%%%%%%%%%%%5
%
In this section we present the first contribution of the paper, namely the derivation of a NLPRE for the estimation of the unknown parameters $\theta$ proceeding from the   ``power-balance'' equation \eqref{powbal}---to which we will refer in the sequel as \textit{ power-balance equation parameterization (PBEP)}. As indicated in the Introduction this is a generalization of the procedure proposed for robot manipulators in \cite{KHADOMbook, SLOLIaut}, see also  \cite{ROMORTBOB}. We also recall in Subsection  \ref{subsec32} that, imposing the assumption that the functions $f$ and $g$ of \eqref{sys} admit a linear parameterization---usually doing some overparameterization---it is possible to apply standard filtering techniques to derive a LRE \textit{ without} the requirement of dissipativity. The advantages of the new PBEP, with respect to the latter one, are discussed in Subsection \ref{subsec33} below. 
\subsection{Power-balance equation parameterization}
\lab{subsec31}
%%%%%%%%%%%
%
\begin{proposition}
	\label{pro1}
	Consider the nonlinear system \eqref{sys} satisfying Assumptions {\bf A1}-{\bf A3}. Fix an LTI, stable filter
	\begequ
	\lab{f}
	F(\calp)={\lambda \over \calp+\lambda},
	\endequ
	with $\lambda>0$. Define the signals
	\begsubequ
	\lab{dynext}
	\begali{
		\lab{dynexty}
		Y &:=F(\calp)[b_s - b_d]- \calp F(\calp)[ b_S] \\
		\lab{dynextome}
		\Omega &:=\begmat{ -F(\calp)[\phi _s] \\  \calp F(\calp)[\phi _S]\\ F(\calp)[\phi_d] }.
	}
	\endsubequ
	The following NLPRE holds
	\begequ
	\lab{lre}
	Y =\Omega^\top   \calg(\theta),
	\endequ
	where $\calg(\theta)$ is defined in \eqref{calg}.
\end{proposition}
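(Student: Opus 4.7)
The plan is a direct manipulation of the power-balance equation \eqref{powbal}, using the separable parameterizations of Assumption~\textbf{A3} to isolate the unknown blocks $\calg_s(\theta)$, $\calg_S(\theta)$, $\calg_d(\theta)$, and then applying the stable filter $F(\calp)$ to eliminate the time derivative of $S$ without having to differentiate the state.

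First, I would substitute \eqref{linpars}, \eqref{linparS}, \eqref{linpard} into \eqref{powbal}. Since $\theta$ is constant, $\calg_S(\theta)$ factors out of the time derivative on the left, yielding
\begin{equation*}
\dot\phi_S^\top(x)\calg_S(\theta)+\dot b_S(x)=-\phi_d^\top(x)\calg_d(\theta)-b_d(x)+\phi_s^\top(u_p,y_p)\calg_s(\theta)+b_s(u_p,y_p).
\end{equation*}
Collecting the known (parameter-free) quantities $b_s$, $b_d$, $b_S$ on one side and the regressors multiplying the unknown blocks on the other, I obtain a relation of the form $b_s-b_d-\dot b_S=-\phi_s^\top\calg_s+\dot\phi_S^\top\calg_S+\phi_d^\top\calg_d$.

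The second step is to apply $F(\calp)$ to both sides of this identity. The key technical point is the well-known operator identity $F(\calp)[\dot u]=\calp F(\calp)[u]$, valid modulo an exponentially decaying term arising from the filter's initial condition (which does not affect the algebraic form of the regression equation and is absorbed in the signal definitions). This converts $F(\calp)[\dot b_S]$ into $\calp F(\calp)[b_S]$ and $F(\calp)[\dot\phi_S^\top]$ into $\calp F(\calp)[\phi_S^\top]$, both of which are implementable since $\calp F(\calp)=\lambda\calp/(\calp+\lambda)$ is a proper, stable transfer function applied to signals that are pointwise available from the measured state via Assumption~\textbf{A1}.

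Finally, I would identify the right-hand side term-by-term with $\Omega^\top\calg(\theta)$: the coefficient of $\calg_s(\theta)$ becomes $-F(\calp)[\phi_s]$, the coefficient of $\calg_S(\theta)$ becomes $\calp F(\calp)[\phi_S]$, and the coefficient of $\calg_d(\theta)$ becomes $F(\calp)[\phi_d]$, which matches \eqref{dynextome} after transposition. The left-hand side then equals $Y$ as defined in \eqref{dynexty}, completing the derivation of the NLPRE \eqref{lre}. Since the manipulation is purely algebraic plus one application of a stable filter, I do not anticipate a real obstacle; the only point requiring care is the exponentially decaying filter-initial-condition term, which is conventionally dropped or lumped into $Y$ without altering the subsequent estimator analysis.
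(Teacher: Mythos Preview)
Your proposal is correct and follows essentially the same route as the paper: substitute the separable parameterizations of Assumption~\textbf{A3} into the power-balance equation, exploit that $\theta$ is constant to pull $\calg_S(\theta)$ outside the time derivative, apply the stable filter $F(\calp)$ using $F(\calp)[\dot u]=\calp F(\calp)[u]$, and read off $Y=\Omega^\top\calg(\theta)$. The only cosmetic difference is ordering---the paper filters first and then substitutes the parameterizations, whereas you substitute first and then filter---and your explicit remark about the exponentially decaying initial-condition term is a small refinement the paper leaves implicit.
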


\begin{proof}
	We carry out the next operations
	\begalis{
		\dot S&= -d+ s  \qquad \qquad\qquad\qquad \qquad\qquad \qquad\;\; \qquad\quad\;\;(\Leftrightarrow\;  \eqref{powbal})\\
		\calp F(\calp)[ S] & = -F(\calp)[ d]+F(\calp)[s] \qquad \qquad \qquad\qquad\qquad \qquad\;\;\;(\Leftarrow\;F(\calp)[\cdot])\\
		\calp F(\calp)[\phi_S^\top\calg_S(\theta) + b_S] & = -F(\calp)[\phi^\top_d\calg_d(\theta)  + b_d]+ F(\calp)[\phi^\top_s\calg_s(\theta)  + b_s] \quad \qquad(\Leftarrow\; \eqref{linpar})\\
		F(\calp)[b_s] - F(\calp)[ b_d]- \calp F(\calp)[ b_S]&=\begmat{ -F(\calp)[\phi^\top _s] & \calp F(\calp)[\phi^\top _S]& F(\calp)[\phi^\top_d] } \calg(\theta) \quad \quad (\Leftarrow\;\theta=const,\;\eqref{calg}) \\
		Y &= \Omega^\top \calg(\theta)  \qquad \qquad\qquad\qquad \qquad \qquad \quad\qquad \qquad\;(\Leftrightarrow\;\eqref{dynext}, \eqref{lre}).
	}
\end{proof}
\subsection{Standard linear parametrization}
\lab{subsec32}
%%%%%%%%%%%
%
To derive the standard parameterization we make the following assumption.\\

\noindent {\bf A6} [Standard LPRE] The vector field  ${f}$ and the elements of the matrix $g$, that is, $g_{ij}$ for $i=1,\dots,n,\;j=1,\dots,n_p$, admit the following  parameterization 
\begsubequ
\lab{linparsys}
\begin{align}
	\label{linparf}
	{f}(x,\theta)&={\bm w _{f}}(x)\calc(\theta) +{\bf  b}_{f}(x)\\
	\label{linparg}
	g_{ij}(x,\theta)&=\phi^\top_{g_{ij}}(x)\calc( \theta) + b_{g_{ij}}(x),
\end{align}
\endsubequ
with the mapping $\calc: \rea^{q} \to \rea^{n_w},\;n_w \geq q$, and \textit{ known} functions 
\begalis{
	&{\bm w_{f}}:\rea^{n}\to \rea^{n \times n_w},\;\phi_{g_{ij} }:\rea^{n} \to \rea^{n_w}\\
	& {\bf  b}_{f} :\rea^n \to \rea^n,\;b_{g_{ij} }:\rea^n \to \rea.
}  

To streamline the presentation of the result we define the matrices
\begequ
\lab{inpmat}
{\bm w_g }(x,u_p):=\begmat{ \sum_{j=1}^{n_p}\phi^\top _{g_{1j}}(x)u_{pj}  \\ \vdots  \\  \sum_{j=1}^{n_p}\phi^\top _{g_{nj}}(x)u_{pj}},\;
{\bf B}_g(x):=\begmat{ b_{g_{11}}(x) & \cdots & b_{g_{1n_p} }(x) \\ \vdots & \vdots & \vdots \\ 
	b_{g_{n1}}(x) & \cdots & b_{g_{nn_p} }(x)},
\endequ
and notice that ${\bm w_g }:\rea^{n} \times\rea^{n_p}\to \rea^{n \times n_w}$ and $ {\bf B}_g :\rea^n \to \rea^{n \times n_p}$.

\begin{proposition}
	\label{pro2}
	Consider the nonlinear system \eqref{sys} verifying Assumption  {\bf A6}. Define the signals
	\begali{
		\nonumber
		{\bf Y}&:= \calp F(\calp)[x]- F(\calp)[{\bf  b}_{f}+ {\bf B}_g u_p] \\
		\lab{dynexts}
		{\bm \Omega}^\top  &:=F(\calp)[{\bm w_f }+{\bm w_g }]
	}
	where $F(\calp)$ is defined in \eqref{f}. Define the extended vector of parameters
	\begequ
	\lab{The}
	\Theta:=\calc(\theta).
	\endequ 
	The following LRE holds
	\begequ
	\lab{lres}
	{\bf Y} ={\bm \Omega}^\top  \Theta.
	\endequ
\end{proposition}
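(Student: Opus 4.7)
The strategy is to parallel the derivation of Proposition~\ref{pro1}: substitute the parameterization \eqref{linparsys} into the state equation \eqref{sys}, isolate the constant vector $\mathcal{C}(\theta)$, and then apply the stable filter $F(\mathfrak{p})$ so that time derivatives become bounded, implementable signals.

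First, I would rewrite the input channel $g(x,\theta)u_p$ in a form in which $\mathcal{C}(\theta)$ appears as a right factor. Using \eqref{linparg}, the $i$-th component of $g(x,\theta)u_p$ is
\begin{align*}
	\sum_{j=1}^{n_p}\bigl[\phi_{g_{ij}}^{\top}(x)\mathcal{C}(\theta)+b_{g_{ij}}(x)\bigr]u_{pj}
	=\Bigl[\sum_{j=1}^{n_p}\phi_{g_{ij}}^{\top}(x)u_{pj}\Bigr]\mathcal{C}(\theta)+\sum_{j=1}^{n_p}b_{g_{ij}}(x)u_{pj},
\end{align*}
so that, recalling the definitions \eqref{inpmat}, $g(x,\theta)u_p=\bm{w}_g(x,u_p)\mathcal{C}(\theta)+\mathbf{B}_g(x)u_p$. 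Combining with \eqref{linparf} the plant dynamics take the affine-in-$\Theta$ form
\begin{align*}
	\dot x \;=\; \bigl[\bm{w}_f(x)+\bm{w}_g(x,u_p)\bigr]\mathcal{C}(\theta) \;+\; \mathbf{b}_f(x)+\mathbf{B}_g(x)u_p .
\end{align*}

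Next I would apply the LTI operator $F(\mathfrak{p})$ defined in \eqref{f} to both sides, exploiting the identity $F(\mathfrak{p})[\dot x]=\mathfrak{p}F(\mathfrak{p})[x]$ (which is valid because $F$ is a stable, proper, SISO operator, and is applied to each scalar component), together with the fact that $\theta$ is a constant vector, so $\mathcal{C}(\theta)$ commutes with $F(\mathfrak{p})$ and can be pulled outside the filter. This yields
\begin{align*}
	\mathfrak{p}F(\mathfrak{p})[x]-F(\mathfrak{p})\bigl[\mathbf{b}_f+\mathbf{B}_g u_p\bigr] \;=\; F(\mathfrak{p})\bigl[\bm{w}_f+\bm{w}_g\bigr]\mathcal{C}(\theta),
\end{align*}
which upon recognizing the definitions \eqref{dynexts} of $\mathbf{Y}$ and $\bm{\Omega}^{\top}$, and the definition \eqref{The} of $\Theta$, is exactly the LRE \eqref{lres}.

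There is no real obstacle here; the one point that merits explicit mention is the commutation of $F(\mathfrak{p})$ with the time-invariant vector $\mathcal{C}(\theta)$ and with the differentiator, which relies on linearity of the filter and on $\theta$ being constant in time. The construction is purely algebraic-filtering, in perfect analogy with Proposition~\ref{pro1}; the essential conceptual difference is only that now the regressor is built from the full vector field rather than from the scalar power-balance equation, which is why the filtered state derivative $\mathfrak{p}F(\mathfrak{p})[x]$ (an $n$-dimensional signal) replaces the scalar $\mathfrak{p}F(\mathfrak{p})[S]$.
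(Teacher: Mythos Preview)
Your proof is correct and follows essentially the same route as the paper's own argument: rewrite the dynamics in the affine-in-$\Theta$ form $\dot x=(\bm{w}_f+\bm{w}_g)\Theta+\mathbf{b}_f+\mathbf{B}_g u_p$, apply $F(\mathfrak{p})$, and use constancy of $\Theta$ to pull it outside the filter. If anything, your version is slightly more explicit than the paper's, since you spell out the component-wise computation showing $g(x,\theta)u_p=\bm{w}_g(x,u_p)\mathcal{C}(\theta)+\mathbf{B}_g(x)u_p$, which the paper takes as immediate from the definitions~\eqref{inpmat}.
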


\begin{proof}
	The following set of operations is carried out 
	\begin{align*}
		\dot x  =& \left(\bm{w}_f +\bm{w }_g \right)\Theta+\bm{b}_f+{ \bf B}_gu_p \qquad \hspace{2.4cm}(\Leftrightarrow\;\eqref{sys},\eqref{linparsys},\eqref{The})\\
		\calp F(\calp)[ x]  =&F(\calp)[ \left(\bm{w }_f +\bm{w }_g \right)\Theta]+F(\calp)[\bm{b}_f+{ \bf B}_gu_p] \qquad \qquad (\Leftarrow\;F(\calp)[\cdot])\\
		\calp F(\calp)[x]-F(\calp)[\bm{b}_f+ { \bf B}_gu_p]  =& F(\calp)[\bm{w }_f +\bm{w }_g ]\Theta \hspace{4.5cm}(\Leftarrow\;\Theta=const) \\
		{\bf Y} =&  \bf{\Omega}^\top \Theta \hspace{6.4cm} (\Leftrightarrow\;\eqref{dynexts}).
	\end{align*}
\end{proof}

\subsection{Discussion}
\lab{subsec33}
%%%%%%%%%%%
%
The following remarks are in order.\\

\noindent {\bf P5} The main advantages of the PBEP with respect to the standard one include the following:
\begite
\item[(i)] The \textit{reduced complexity} in the derivation of the corresponding parameterization \eqref{lre} and \eqref{lres} is evident comparing \eqref{dynext} with \eqref{dynexts}. The examples given in Section \ref{sec5} will further illustrate this point.

\item[(ii)] Mathematical modeling of physical systems usually proceeds from a classification of its components into energy-storing and energy-dissipating, which appear explicitly in the storage and dissipation functions, respectively. These elements are interconnected among themselves and  the external sources via the physical laws, \textit{ e.g.}, Kirchhoff's or Newton's---see \cite{VANJELbook}.  In many cases, including mechanical, electrical and electromechanical systems, the system parameters  verify the ``monotonicity" Assumption {\bf A4}. 

\item[(iii)] In contrast with the remark above, in the state space description \eqref{sys} the physical parameters will enter the functions $f$ and $g$ multiplied among themselves rendering harder the verification of  Assumption {\bf A6} . Very often, \textit{e.g.}, in robotics, it is possible to \textit{ overparameterize} the functions to comply with the linearity requirement.  It is well-known that overparameterization has very severe shortcomings, see \cite{LJUbook,ORTetal} for a detailed discussion on this point.
\endite

\noindent {\bf P6} A state-space realization of  \eqref{dynexty} is given by 
\begalis{
	\dot x_Y &= - \lambda (x_Y+b_S) + b_d - b_s\\
	Y &=-{\lambda}(x_Y+b_S).
}

On the other hand, a state-space realization for   \eqref{dynextome} is
	\begin{align*}
		\dot x_{\Omega} &= \begin{bmatrix}-\lambda x_{\Omega_1}-\phi_s\\-\lambda (x_{\Omega_2}-\phi_S)\\-\lambda x_{\Omega_3}+\phi_d\end{bmatrix}\\
		\Omega &=\lambda\;\col(x_{\Omega_1},\phi_S-x_{\Omega_2},x_{\Omega_3}).
	\end{align*}
	It is clear that the state space description of  \eqref{dynexts} is very similar to the one given above  for \eqref{dynext}, hence it is omitted for brevity.\\
	
	\noindent {\bf P7} As thoroughly discussed in  \cite{VANbook,VANJELbook}, many physical systems can be described by port-Hamiltonian models of the form
	\begalis{
		\dot x &=[\calj(x) - \calr]\nabla H(x)+g(x)u_p\\
		y_p&=g^\top (x)\nabla H(x),
	}
	where $\calj(x)=-\calj^\top(x)$ and  $\calr=\calr^\top \geq 0$ is a constant matrix, representing the interconnection and dissipation structures and $H:\rea^n \to \rea_+$ is the energy function of the system. These systems are passive and they satisfy the power-balance equation \eqref{powbal} with 
	$$
	s=u_p^\top y_p,\;S=H,\;d=\nabla H^\top \calr \nabla H.
	$$
	In many practical examples the energy function is of the form $H(x)=\hal x^\top Q x$, with $Q>0$. In this case, Assumption  {\bf A4} is satisfied introducing a reparameterization of the form $Q^\top \calr Q$ for the elements of the dissipation function $d$.\\
	
	\noindent {\bf P8} It is clear from the derivations above that $F(\calp)$ can be replaced in both propositions by any strictly proper stable LTI filter. This degree of freedom can be exploited to attenuate the deleterious effect of measurement noise.\\
	
	\noindent {\bf P9} In some applications part of the external sources, denoted $E \in \rea^{n_p-m}$ in the system description of Section \ref{sec2}, are \textit{ unkown}. If they enter in the supply rate in the form
	$$
	s(u_p,y_p)=s_k(u,y_p)+s _u^\top(u,y_p)E,
	$$
	with known functions $s_k$ and $s _u$ and they are constant, it is possible to incorporate these uncertain parameters into the vector $\theta$ and derive a new NLPRE that includes them. \\
	
	\noindent {\bf P10}
	For the sake of clarity of presentation, in Assumption {\bf A3} we suppose that the supply rate $s$ and the dissipation functions $d$ admit \textit{ independent} parameterizations of the form \eqref{linpars} and \eqref{linpard}. From the proof of Proposition \ref{pro1} it is clear that we can replace this by the existence of a parameterization of the form 
	$$
	{s}(u_p,y_p,\theta)+ d(x,\theta)=\phi_{sd}^\top(u_p,y_p,x)\calg_{sd}(\theta) + b_{sd}(u_p,y_p,x),
	$$
	with known functions $\phi_{sd}$ and  $b_{sd}$. This variation will, in general, yield simpler expressions for the regressors. 
	%
	%%%%%%%%%%%%%%%5
	\section{Indirect Adaptive Control}
	\lab{sec4}
	%%%%%%%%%%%%%%%5
	%
	In this section we present the second main contribution of the paper, namely the construction of a UAS indirect adaptive controller for systems satisfying Assumptions {\bf A1}-{\bf A5} using the PBEP \eqref{lre} of Proposition \ref{pro1}.  As indicated in point {\bf C2} of the Introduction the key step is the utilization of the G+D estimator of \cite[Proposition 7]{WANetal} that we briefly recall in the lemma below---whose  proof is given in the previous reference. Replacing this estimates in the controller of  Assumption {\bf A5} yields the proposed adaptive controller.
	\subsection{The G+D estimator of \cite[Proposition 7]{WANetal}}
	\lab{subsec41}
	%%%%%%%%%%%
	%
	As expected from an identification-based procedure some excitation assumptions will be required. However,    as shown in \cite{WANetal} it is possible to achieve global, exponential convergence of the parameter error imposing the following extremely weak \textit{ interval excitation} assumption \cite{KRERIE,TAObook} of the regressor vector $\Omega$ of the NLPRE \eqref{lre}.  \\
	
	\noindent {\bf A7} [Boundedness and Interval Excitation] The regressor vector $\Omega$ of the NLPRE \eqref{lre} is bounded\footnote{Boundedness of $\Omega$ is a blanket assumption made to avoid technicalities in the proofs.} an interval exciting. That is, there exists constants $C_c>0$ and $t_c>0$ such that
	\begalis{
		&\int_0^{t_c} \Omega(s) \Omega^\top(s)  ds \ge C_c I_q.
	}
	
	\begin{lemma}
		\lab{lem1}
		Consider the NLPRE (\ref{lre}) with  $\calg(\theta)$ satisfying Assumption {\bf A4} and $\Omega$ verifying Assumption  {\bf A7}. Define the G+D interlaced estimator
		\begsubequ
		\lab{intestt}
		\begali{
			\lab{thegt}
			\dot{\hat \theta}_g & =\gamma_g \Omega (Y-\Omega^\top \hat\theta_g),\; \hat\theta_g(0)=\theta_{g0} \in \rea^p\\
			\lab{phit}
			\dot { \Phi}& =  -\gamma_g \Omega \Omega^\top  \Phi,\; \Phi(0)=I_p\\
			\lab{thet}
			\dot{\hat \theta} & =\gamma PT \Delta [\caly -\Delta \calg(\hat\theta) ],\; \hat\theta(0)=\theta_0 \in \rea^q,
		}
		\endsubequ
		with tuning gains $\gamma_g>0$, $\gamma >0$,  and we defined
		\begsubequ
		\lab{aydelt}
		\begali{
			\lab{delt}
			\Delta & :=\det\{I_p-\Phi\}\\
			\lab{yt}
			\caly & := \adj\{I_p- \Phi\} [\hat\theta_g -   \Phi\theta_{g0}],
		}
		\endsubequ
		where $ \adj\{\cdot\}$ denotes the adjugate matrix. Then, for all $\theta_{g0} \in \rea^p$ and $\theta_{0} \in \rea^q$, we have the exponential convergence
		\begequ
		\lab{parcon}
		\lim_{t \to \infty}\tilthe(t)=0,
		\endequ
		with all signals \textit{ bounded}.
	\end{lemma}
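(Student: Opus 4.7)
The proof naturally splits into a ``gradient + dynamic extension'' part (which produces the scalar-like equation $\caly=\Delta\calg(\theta)$) and a monotonicity-based Lyapunov part (which turns that equation into exponential decay of $\tilde\theta$).

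First, I would introduce the auxiliary error $\tilde\theta_g:=\hat\theta_g-\calg(\theta)$. Since $\theta$ is constant, subtracting the NLPRE (\ref{lre}) in \eqref{thegt} gives $\dot{\tilde\theta}_g=-\gamma_g\Omega\Omega^\top\tilde\theta_g$, which is exactly the matrix ODE \eqref{phit} governing $\Phi$. Because $\Phi(0)=I_p$, the solution operator coincides with $\Phi(t)$, hence
\begin{equation*}
\tilde\theta_g(t)=\Phi(t)\tilde\theta_g(0)=\Phi(t)\bigl(\theta_{g0}-\calg(\theta)\bigr).
\end{equation*}
Rearranging, $\hat\theta_g-\Phi\,\theta_{g0}=(I_p-\Phi)\calg(\theta)$, and premultiplying by $\adj\{I_p-\Phi\}$ together with the identity $\adj\{A\}\,A=\det\{A\}I$ yields precisely
\begin{equation*}
\caly=\Delta\,\calg(\theta),
\end{equation*}
which is the key ``scalar regressor'' reformulation.

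Next I would analyse $\Delta(t)$. Since $\Phi$ starts at $I_p$ and satisfies a linear matrix equation whose generator is negative semidefinite, $\Phi(t)$ remains symmetric and $0\preceq\Phi(t)\preceq I_p$ for all $t\ge 0$, which already ensures $\Delta(t)\in[0,1]$ and provides boundedness throughout. Using the integral representation of $\Phi$ and the interval-excitation Assumption \textbf{A7} (in the same way as in \cite{BOBetal,WANetal}), one shows that all eigenvalues of $\Phi(t_c)$ are strictly less than one, and that $\Phi(t)\preceq\Phi(t_c)$ for $t\ge t_c$; consequently there exists $\Delta_\star>0$ such that $\Delta(t)\ge\Delta_\star$ for all $t\ge t_c$. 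This is the step I expect to be the most delicate, but it is essentially a routine property of the gradient filter $\Phi$ under interval excitation and can be borrowed from \cite[Proposition 7]{WANetal}.

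Finally, substituting $\caly=\Delta\calg(\theta)$ into \eqref{thet} and using $\calw=T\calg$ gives
\begin{equation*}
\dot{\tilde\theta}=-\gamma\Delta^{2}P\bigl[\calw(\hat\theta)-\calw(\theta)\bigr],
\end{equation*}
so with the Lyapunov candidate $V(\tilde\theta)=\tfrac12|\tilde\theta|^{2}$ the monotonicity bound \eqref{monpro-S} yields
\begin{equation*}
\dot V=-\gamma\Delta^{2}\tilde\theta^{\top}P\bigl[\calw(\hat\theta)-\calw(\theta)\bigr]\le -2\gamma\rho\,\Delta^{2}(t)\,V.
\end{equation*}
Boundedness of all signals on $[0,t_c]$ is immediate since $\dot V\le 0$, and for $t\ge t_c$ the lower bound $\Delta(t)\ge\Delta_\star$ gives $\dot V\le -2\gamma\rho\Delta_\star^{2}V$, whence $|\tilde\theta(t)|$ decays exponentially and \eqref{parcon} follows. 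The only nontrivial ingredient beyond direct computation is the quantitative lower bound on $\Delta$, which is the reason Assumption \textbf{A7} (interval excitation, rather than only persistency) is what is required.
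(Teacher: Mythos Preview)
Your argument is correct and coincides with the route the paper itself takes: the paper does not prove Lemma~\ref{lem1} directly but defers to \cite[Proposition~7]{WANetal}, and in the proof of Proposition~\ref{pro3} it records exactly your two key ingredients, namely the error equation $\dot{\tilde\theta}=-\gamma\Delta^{2}PT[\calg(\hat\theta)-\calg(\theta)]$ and a quadratic Lyapunov function. Two small remarks. First, the paper quotes $V(\tilde\theta)=\tfrac12\tilde\theta^{\top}P^{-1}\tilde\theta$, whereas you use $V=\tfrac12|\tilde\theta|^{2}$; with Assumption~{\bf A4} written as $(a-b)^{\top}P[\calw(a)-\calw(b)]\ge\rho|a-b|^{2}$ your choice is in fact the one that makes the derivative computation close directly, so this difference is harmless (and arguably cleaner). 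Second, your assertion that $\Phi(t)$ stays \emph{symmetric} is not quite right: the flow $\dot\Phi=-\gamma_g\Omega\Omega^{\top}\Phi$ does not preserve symmetry in general. What does hold is $\tfrac{d}{dt}(\Phi^{\top}\Phi)=-2\gamma_g\Phi^{\top}\Omega\Omega^{\top}\Phi\preceq 0$, giving $\|\Phi(t)\|\le 1$ and hence boundedness of $\Delta$ and $\caly$; the lower bound $\Delta(t)\ge\Delta_\star>0$ for $t\ge t_c$ then follows from Assumption~{\bf A7} exactly as you indicate via \cite{WANetal}. With that adjustment your proof is complete and matches the paper's intended argument.
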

	\subsection{Main adaptive stabilization result}
	\lab{subsec42}
	%%%%%%%%%%%
	%
	In the proposition below we present an indirect adaptive controller for the system \eqref{sys} that ensures UAS of the closed-loop. 
	
	\begin{proposition}
		\label{pro3}
		Consider the nonlinearly parameterized, nonlinear system \eqref{sys} satisfying Assumptions  {\bf A1}-{\bf A5}, with $\Omega$, defined in \eqref{dynextome} of Proposition \ref{pro1}, satisfying Assumption  {\bf A7}.
		Let the adaptive control be given by
		$$
		u =\beta(x ,\hat \theta,t),
		$$
		where $\hat \theta$ is generated via  the G+D parameter estimator of Lemma \ref{lem1}. Then, the zero equilibrium of the adaptive error system \eqref{errsysada} is UAS. Consequently,    \eqref{asycon} holds with all signals bounded provided the initial errors $\tilde x(0)$ are sufficiently small.
	\end{proposition}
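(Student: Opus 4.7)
The plan is a cascade/total-stability argument that cleanly decouples the parameter estimator from the state-tracking loop. First I would apply Lemma \ref{lem1} directly: Assumptions \textbf{A3} and \textbf{A4} give the separable NLPRE \eqref{lre} with strongly $P$-monotonic $\calw$, and Assumption \textbf{A7} gives interval excitation of $\Omega$, so the G+D estimator \eqref{intestt} yields global exponential convergence $\tilthe(t)\to 0$ with all estimator signals bounded. The essential structural point is that the estimator is driven only by the measured state $x$ through the regressor $\Omega$ (built via the filter $F(\calp)$ from $\phi_s,\phi_S,\phi_d$), so its dynamics do not depend on the control $u$ and form the autonomous driving subsystem of a cascade.

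Second, I would rewrite \eqref{errsysada} as the vanishing-perturbation form
\begin{equation*}
\dot{\tilde x} = f_\star(\tilde x,\theta,E,t) + \Delta(\tilde x,\tilthe,t),
\end{equation*}
where
\begin{equation*}
\Delta(\tilde x,\tilthe,t) := g(\tilde x+x_\star,\theta)\begmat{\beta(\tilde x+x_\star,\tilthe+\theta,t)-\beta(\tilde x+x_\star,\theta,t) \\ 0}
\end{equation*}
satisfies $\Delta(\tilde x,0,t)\equiv 0$. Smoothness of $\beta$ and $g$ together with boundedness of $x_\star$ yield a local Lipschitz bound $|\Delta(\tilde x,\tilthe,t)|\le L\,|\tilthe|$ on any compact neighbourhood of the origin in $\tilde x$, uniformly in $t$.

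Third, I would invoke a converse Lyapunov theorem for the UAS origin of the nominal error system in Assumption \textbf{A5} (Kurzweil's theorem, or \cite[Thm.~4.16]{}-style result): this produces a smooth, locally positive definite and decrescent $V(\tilde x,t)$ with $\dot V|_{\mathrm{nom}} \le -W(\tilde x)$ on a ball $B_r$, and $\|\nabla V\|\le c$. Along the perturbed dynamics one obtains $\dot V \le -W(\tilde x) + c\,L\,|\tilthe(t)|$. Since Lemma \ref{lem1} gives $|\tilthe(t)|\le \kappa e^{-\lambda t}$, the perturbation is integrable, and a standard vanishing-perturbation estimate yields that, for $|\tilde x(0)|$ sufficiently small, $\tilde x$ remains in $B_r$ for all $t\ge 0$ and converges to zero; the time-uniform decay estimates supplied by the converse Lyapunov function then upgrade this attractivity and (local) stability to UAS of the closed-loop, giving \eqref{asycon}.

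The main obstacle is that Assumption \textbf{A5} supplies only UAS and not local \emph{exponential} stability, so a plain ISS/small-gain argument is not available. The delicate point is to guarantee that $\tilde x$ does not escape the level set on which the converse Lyapunov function is defined during the transient while $\tilthe$ is still appreciable; this forces the ``sufficiently small $\tilde x(0)$'' caveat in the statement, and is handled by choosing the initial $V$-sublevel set small enough that the accumulated perturbation $\int_0^\infty cL|\tilthe(t)|\,dt$ stays inside the basin of the nominal Lyapunov function. If \textbf{A5} were strengthened to local exponential stability, the argument collapses to a textbook ISS cascade, which is exactly the reason \textbf{P1} flags that global tracking requires global exponential stability of the nominal loop.
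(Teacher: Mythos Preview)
Your overall architecture matches the paper's: write the closed loop as a cascade with the parameter error $\tilthe$ driving the tracking error $\tilde x$, appeal to Lemma~\ref{lem1} for global exponential stability of the $\tilthe$-subsystem, and then conclude UAS of the origin. The difference is in how the last step is executed. The paper records the cascade form $\dot{\tilde x}=F_1(\tilde x,\tilthe,t)$, $\dot{\tilthe}=F_2(\tilthe,t)$, checks the technical bound $\sup_{t\ge0}\sup_{|\tilthe|\le c}\|\nabla_{\tilthe}F_2(\tilthe,t)\|<\infty$ (which follows from boundedness of the G+D signals), and then invokes Vidyasagar's cascade theorem \cite[Theorem~3.1]{VID} in one line. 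You instead build the conclusion by hand: a converse Lyapunov function for the UAS nominal $\tilde x$-system, a local Lipschitz bound on the perturbation, and an integrable-perturbation (total-stability) estimate exploiting the exponential decay of $\tilthe$. Both routes are correct; the paper's is shorter because the work is outsourced to an off-the-shelf result, whereas yours is more self-contained and makes transparent why the ``sufficiently small $\tilde x(0)$'' restriction appears and why mere UAS (not exponential stability) of the nominal loop suffices.

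One inaccuracy to correct: the regressor $\Omega$ is \emph{not} a function of $x$ alone, since $\phi_s=\phi_s(u_p,y_p)$ depends on $u$. The cascade decoupling does not come from the estimator being independent of the control; it comes from the fact that, after substituting the NLPRE \eqref{lre}, the error equation reads $\dot{\tilthe}=-\gamma\Delta^2 PT[\calg(\tilthe+\theta)-\calg(\theta)]$ with $\Delta(t)$ a bounded scalar signal, so the $\tilthe$-dynamics can be viewed as a time-varying system in $\tilthe$ alone whose GES is guaranteed by Lemma~\ref{lem1} under the \emph{closed-loop} Assumption~\textbf{A7}. This fix does not affect the remainder of your argument.
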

	
	\begin{proof}
		First, notice that using \eqref{lre} the error equation for the estimator is given by
		$$
		\dottilthe = -\gamma \Delta^2 PT[ \calg(\tilde\theta+\theta) - \calg(\theta)]=:F_2(\tilde \theta,t).
		$$
		In \cite[Poposition 7]{WANetal}the Lyapunov function candidate
		$$
		V(\tilde\theta):=\hal \tilde\theta^\top P^{-1}\tilde\theta,
		$$
		is used  to show that, under Assumptions {\bf A4} and  {\bf A7},  its origin is \textit{ globally exponentially stable} .
		
		Second, the state equation of the closed-loop system takes the form
		\begalis{
			\dot {\tilde x} & =f_\star(\tilde x,\theta,t)+\chi(\tilde x,\tilde \theta,t),
		}
		where we defined the perturbation term
		$$
		\chi(\tilde x,\tilde \theta,t) := g(\tilde x+x_\star,\theta)\begmat{ \beta(\tilde x+x_\star,\tilde \theta+\theta,t)  -\beta(\tilde x+x_\star,\theta,t)\\ 0},
		$$
		which satisfies $\chi(\tilde x,0,t)=0$. The overall dynamics of the closed-loop system clearly has a cascade form 
		\begali{
			\nonumber 
			\dot {\tilde x} &  =  F_1(\tilde x,\tilde \theta,t)\\
			\lab{cassys} 
			\dot{\tilde{\theta}}&= F_2(\tilde \theta,t),
		}
		with $F_1(0 ,0,t)=0$ and $F_2(0,t)=0$. Moreover, in Lemma \ref{lem1} it is shown that all signals of the G+D estimator are bounded, consequently there exists a constant $c>0$ such that
		$$
		\sup_{t \geq 0} \; \sup_{|\tilde \theta| \leq c}\|\nabla_{\tilde \theta}F_2(\tilde \theta,t)\| < \infty,
		$$
		where $\|\cdot\|$ is the induced matrix norm. Assumption {\bf A5} ensures that the origin of the $\tilde x$ subsystem is an UAS  equilibrium of the unperturbed system. Invoking \cite[Theorem 3.1]{VID} we conclude that the closed-loop system \eqref{cassys} has a UAS equilibrium at  the origin. 

	\end{proof}
	
	%
	%%%%%%%%%%%%%%%%%
	\section{Examples}
	\label{sec5}
	%%%%%%%%%%%%%%%%%
	%
	In this section the application of the indirect adaptive controller of Proposition \ref{pro3} is illustrated with two different examples.
	
		\subsection{A port-Hamiltonian system}
	\label{subsec51}
	%%%%%%%%%%%%%%%%%
Consider the LTI  port-Hamiltonian system
	\begali{
		\nonumber
		\dot x &  =\begmat{0 & -a \\ a & 0}x+\begmat{\theta \\ \theta^2}u\\
		\lab{phsys}
		y_p&=\begmat{\theta & \theta^2}x,
	}
	where $a$ is known and the state is measurable, hence ensuring Assumption {\bf A1} of Proposition \ref{pro3}. The system is $u \mapsto y_p$ is passive with storage function $S(x)=\hal |x|^2$ and admits the NLPRE \eqref{linpars} with
	$$
	\phi_s=u x,\; \calg_s(\theta)= \begmat{\theta \\ \theta^2}.
	$$
	Hence Assumptions {\bf A2} and  {\bf A3} are satisifed.  Clearly, selecting $T=\begmat{1 & 0}$ and $P=1$ ensures the monotonicity Assumption {\bf A4} with $\rho=1$. 
	
	Assume the control objective is to stabilize the zero equilibrium. The closed-loop polynomial  for a linear state feedback of the form $u=-{1 \over \theta} k^\top x$ is given by
	$$
s^2 +(k_1+k_2\theta)s+ a(a-\theta k_1+k_2),	
	$$
which is a Hurwitz polynomial for $k=\col(1,\theta)$. Therefore, the static state feedback
	$$
	\beta(x,\theta) = - {1 \over \theta}  x_1 - x_2,
	$$
ensures  Assumption {\bf A5}. Finally, since the regressor of the NLPRE \eqref{lre} is given by
	$$
	\Omega = -F(\calp)[\beta(x,\hat \theta) x],
	$$
	it is clear that the interval excitation  Assumption {\bf A7} holds for all $x(0) \neq 0$.  Since all assumptions of Proposition \ref{pro3} are satisfied applying the adaptive controller
	$$
	u=\beta(x,\hat \theta),
	$$ 
	to the system  \eqref{phsys}, with $\hat \theta$ generated with the G+D estimator of Lemma \ref{lem1}, ensures UAS of the closed-loop system. 
	
	In contrast with the situation above if we adopt the standard linear parameterization of Subsection \ref{subsec32} and apply a gradient estimator to the overparameterized LRE the resulting adaptive controller will fail. Indeed,   for the system \eqref{phsys} the overparameterized LRE \eqref{lres} is satisfied with
	\begalis{
		{\bf Y}&:= \calp F(\calp)[x]- F(\calp)\Bigg[ \begmat{-a x_2 \\ a x_1}\Bigg] \\
		{\bm \Omega}  &:=F(\calp)[u]\\
	\Theta&:= \begmat{\theta \\ \theta^2}.
	}
%	\textcolor{blue}{
	Therefore, the error dynamics for the gradient estimator have the form 
	\begin{subequations}
		\begin{align}
		\dot{\tilde\Theta}=&-\gamma \Omega^2 \tilde \Theta\label{error1}\\
		\dot x_{\Omega}=& -\lambda x_\Omega + u\label{error2}\\
		\Omega=&\lambda x_{\Omega}
	\end{align}
\end{subequations}
where the solution of \eqref{error1} is 
$$\tilde \Theta(t)= \tilde\Theta(0)e^{-\gamma\int_0^t \Omega^2(\tau)d\tau}. $$
	It follows that the gradient estimator of $\Theta$ will ensure parameter convergence if and only if $u(t) \not\in \call_2$. This condition will not be satisfied since, according to the control objective,  it is desired that $x\to 0$ which implies that $u\to 0$.%}
\subsection{An electrical circuit}
\label{subsec52}
%%%%%%%%%%%%%%%%%	

\noindent Consider the electrical circuit depicted in Fig. \ref{fig1}. The dynamics of this system is described by
\begin{equation}\label{circuit}
\begin{aligned}
	\begin{bmatrix}
		\theta_1&0\\0&\theta_1^\alpha 
	\end{bmatrix}	\dot x=& 
	\begin{bmatrix} 
		0& 0\\0&-\theta_2 \end{bmatrix}x+\begin{bmatrix}-x_2\\x_1\end{bmatrix}u+\begin{bmatrix}E\\0\end{bmatrix}\\
	y_p=&x_1
\end{aligned}
\end{equation}
where the physical meaning of the state vector $x$ and the parameters $\theta_i >0$ and $E>0$ are given in the figure, with $\alpha\in\mathbb{R}$.

\begin{figure}
	\centering
	\scalebox{1.15}{
		\begin{circuitikz}[scale=1, american voltages, american currents]
			\draw
			(0,2) to[V, l_=\mbox{$E$}] (0,0)
			(0,2) to[cute inductor, l=$\theta_1$,i_=$x_1$] (2,2)
			(2,2) node[transformer, anchor=A1](T) {}
			(T.base) node {$u:1$}
			%	(T.inner dot A1) node[circ] {}
			%	(T.inner dot B1) node[circ] {}
			(T.A2) to[short] (T.A2 -| 0,0) to [short] (0,0)
			(T.B1) to [short] (5,2)
			to [C, l=$\theta_1^\alpha$, v_=$x_2$] (5,0)
			(T.B2) to[short] (T.B2 -| 5,0 ) to [short] (5,0)
			(4.5, 2) to[short] (6.5,2)
			to[resistor, l=$\theta_2$] (6.5,0)
			(T.B2) to[short] (T.B2 -| 6.5,0) to[short] (6.5,0);
	\end{circuitikz}}
	\caption{System \eqref{circuit}.}\label{fig1}
\end{figure}

The system $E \mapsto x_1$ is passive with storage function
$$
S(x)=\frac{1}{2}(\theta_1x_1^2+\theta_1^\alpha x_2^2).
$$
Moreover, it admits the NLPRE \eqref{linpars} with
\begin{align}
	b_s(x)=Ex_1,\;\phi_S(x)=\frac{1}{2}\begin{bmatrix}x_1^2\\ x_2^2\end{bmatrix},\;\phi_d(x)=x_2^2,\;\mathcal{G}_s(\theta)=\begin{bmatrix}\theta_1\\\theta_1^\alpha\\\end{bmatrix},\; \mathcal{G}_d(\theta)=\theta_2.
\end{align}

Hence Assumptions {\bf A2} and  {\bf A3} are satisifed.  Clearly, selecting 
$$
T=\begmat{1 & 0& 0\\0&0&1}.
$$ 
and $P=\rho I_2$ ensures the monotonicity Assumption {\bf A4} for any $\rho > 0$. 

The set of assignable equilibria is given by
\begequ
\lab{assequ}
\{x \in \rea^2\;|\; Ex_1-\theta_2 x^2_2=0\}.
\endequ
Assume the control objective is to regulate the voltage at $x_2=x_{2\star}:=\kappa\in\mathbb{R}$.  It is possible to show that with the static state feedback
\begin{align}\label{beta2}
	u= \beta(x,\theta) := -k_p\Big(\frac{\theta_2\kappa^2}{E}x_2-\kappa x_1 \Big) + \frac{E}{\kappa}
\end{align}
with free gain $k_p>0$, the task is accomplished---that is, Assumption {\bf A5} is satified. To prove it, notice first that from the assignable equilibrium set  \eqref{assequ},  we get the value of $x_1$ at the equilibrium as $x_{1\star}:=\frac{\theta_2}{E}\kappa^2$. Also, from the first equation  of \eqref{circuit} with $\dot{x}=0$, we obtain the value of $u$ at the equilibrium, that is,  $u_\star:=\frac{E}{\kappa}$. Now, with $\tilde{(\;)}:={(\;)-(\;)_\star}$, consider the Lyapunov function $W=S(\tilde x)$. Setting $\dot x=0$ in \eqref{circuit}, we get 
\begin{align}
	\begin{bmatrix}E\\0	\end{bmatrix}= -\begin{bmatrix} 0&0\\0&-\theta_2\end{bmatrix} x_\star - \begin{bmatrix} -x_{2\star}\\x_{1\star}\end{bmatrix}u_{\star},
\end{align}
which substituted into \eqref{circuit} produces
\begin{align}
	\begin{bmatrix}\theta_1&0\\0&\theta_1^\alpha	\end{bmatrix}\dot {\tilde x} =  \begin{bmatrix} 0&0\\0&-\theta_2\end{bmatrix} \tilde x + \begin{bmatrix} -x_{2\star}\\x_{1\star}\end{bmatrix}\tilde u + \begin{bmatrix} -\tilde {x}_{2}\\\tilde {x}_{1}\end{bmatrix}( \tilde u +u_\star ).
\end{align}
Thus, the time derivative of $W$ is
\begin{align*}
	\dot{W}=&-\theta_2 \tilde x_2^2 + \left(\frac{\theta_2\kappa^2}{E}\tilde{x}_2- \kappa \tilde{x}_1\right)\tilde u.
\end{align*}
\begin{figure*}[t!] 
	\centering
	%fig1
	\subfloat[\scriptsize{Estimated parameters.}\label{fig2}]{{\centering %
			\includegraphics[width=0.45\textwidth]{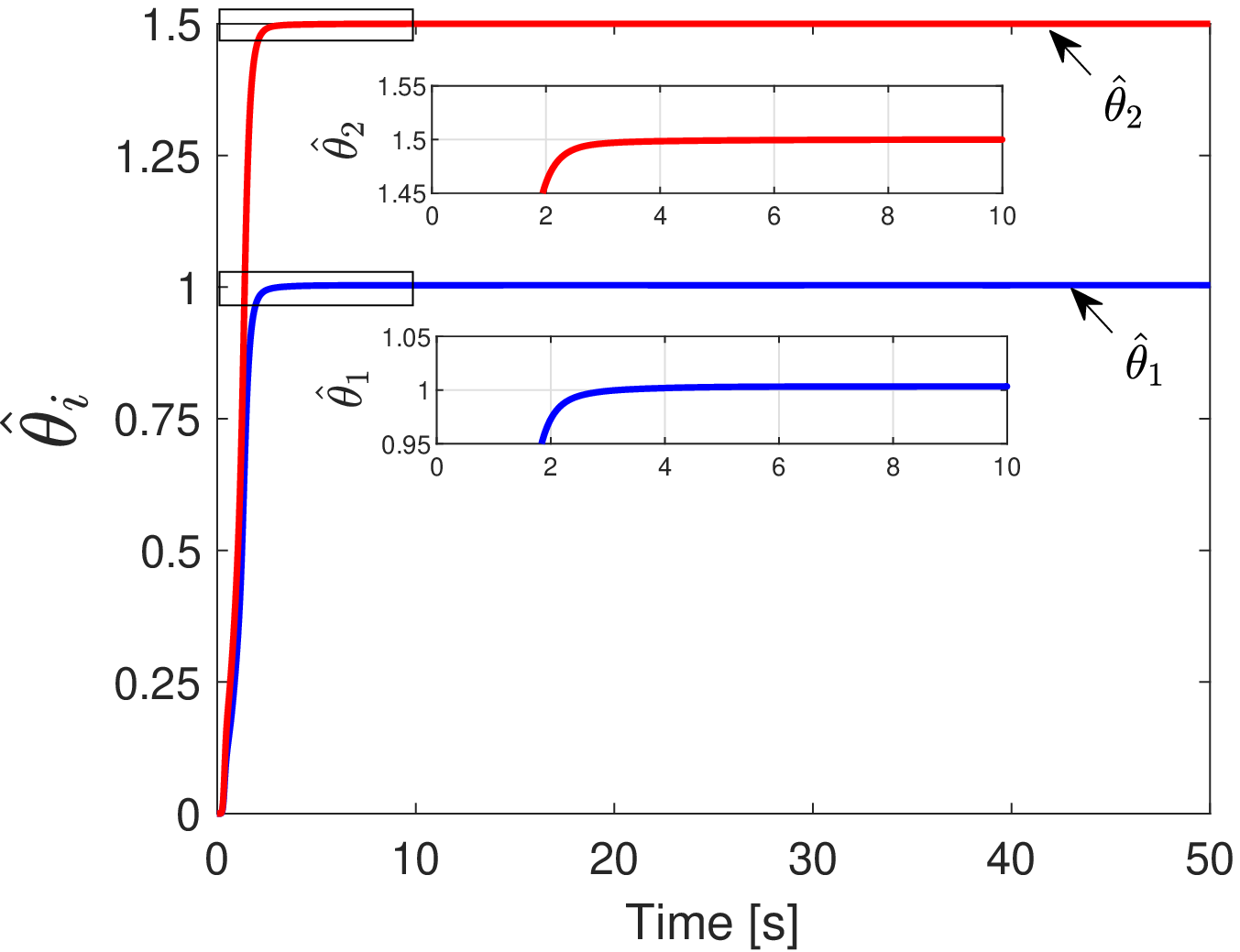}}}\hspace{1.5cm}
	%fig2
	\subfloat[\scriptsize{Closed-loop regulation performance.}\label{fig3}]{{\centering %
			\includegraphics[width=0.45\textwidth]{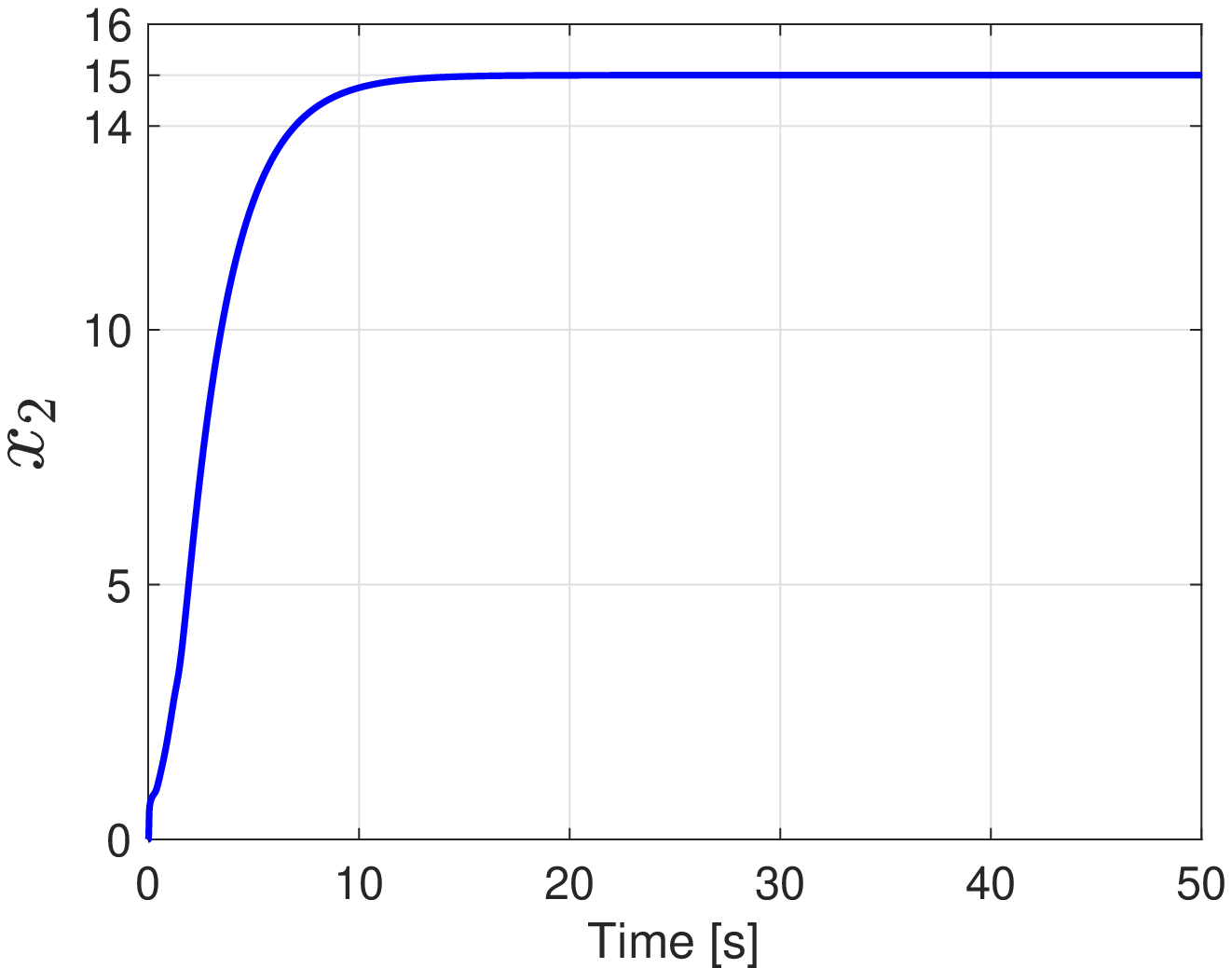}}}
	\caption{\scriptsize{Simulation results for power balance parametrization and G+D estimator.}}\label{sims1}
\end{figure*}
\begin{figure*}[t!] 
	\centering
	%fig1
	\subfloat[\scriptsize{Estimated parameters.}\label{fig4}]{{\centering %
			\includegraphics[width=0.45\textwidth]{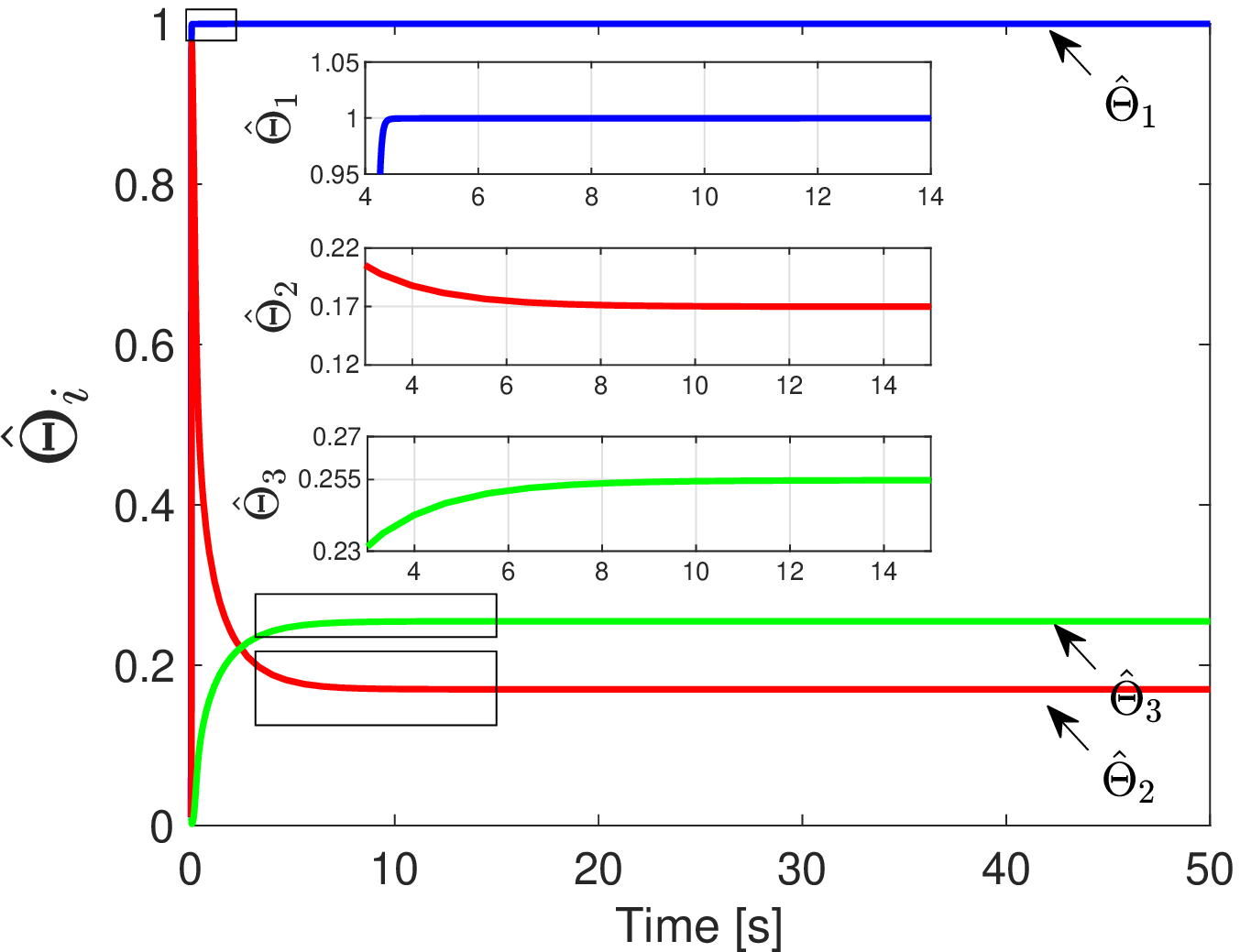}}}\hspace{1.5cm}
	%fig2
	\subfloat[\scriptsize{Closed-loop regulation performance.}\label{fig5}]{{\centering %
			\includegraphics[width=0.45\textwidth]{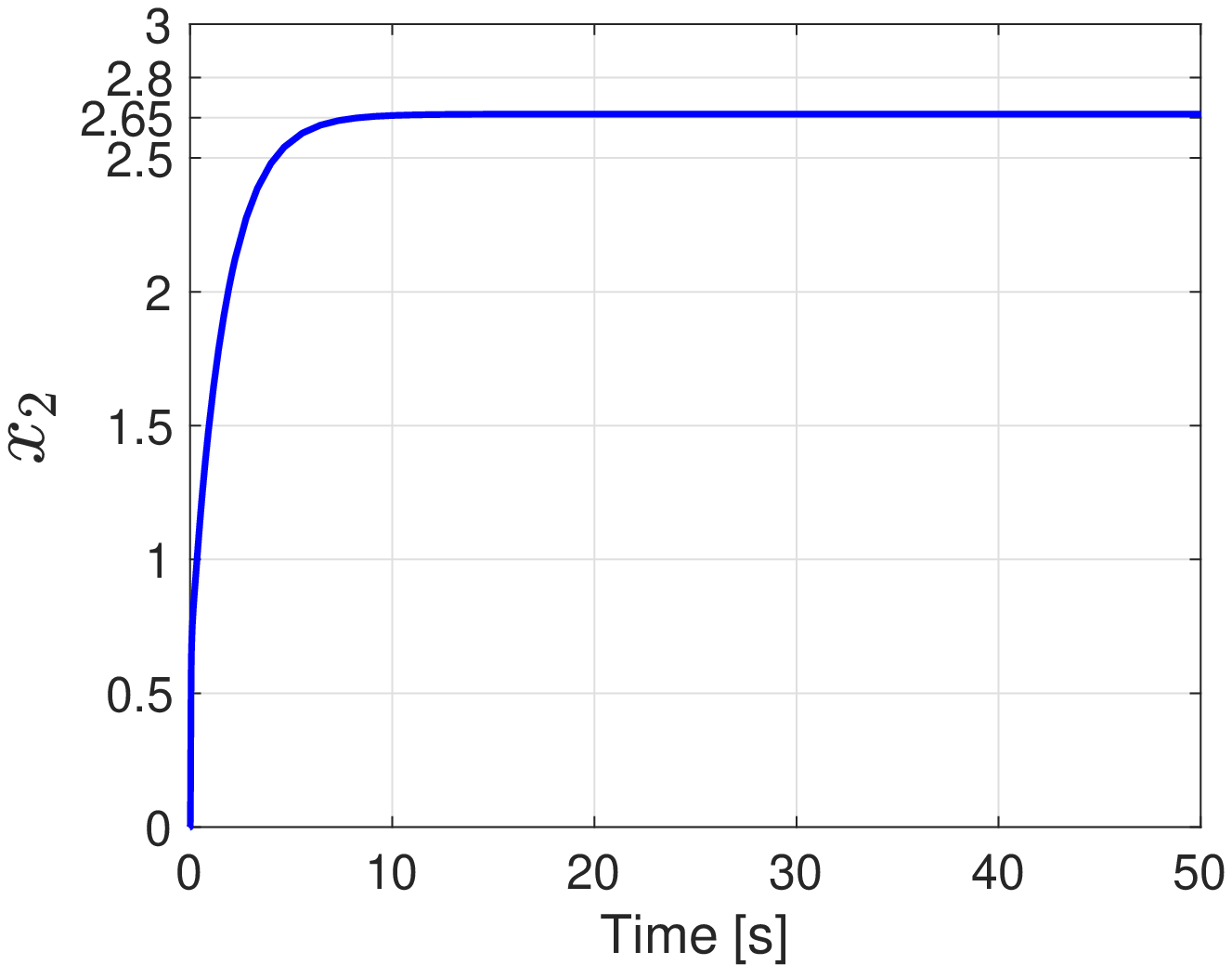}}}
	\caption{\scriptsize{Simulation results for gradient estimator  \eqref{graest} and standard parametrization.}}\label{sims2}
\end{figure*}

From \eqref{beta2} and the definition of $u_\star$, it follows that $\tilde u= \beta - u_\star$. Substituting the later into the last equation produces
\begin{align*}
	\dot{W}=&-\theta_2\tilde x_2^2 - k_p (\frac{\theta_2\kappa^2}{E}\tilde{x}_2- \kappa \tilde {x}_1) (\frac{\theta_2\kappa^2}{E}x_2- \kappa x_1)\\
	=&-\theta_2\tilde x_2^2 - k_p (\frac{\theta_2\kappa^2}{E}\tilde{x}_2- \kappa \tilde {x}_1)^2,
\end{align*}
where, invoking LaSalle's Invariance principle, we can conclude that $x\to x_\star$.

To implement the adaptive controller of Proposition \ref{pro3} we compute from \eqref{dynext}
$$
Y= E F(\calp)[x_1],\;\Omega = \begin{bmatrix}\frac{1}{2} pF(\calp)\Bigg[\begin{bmatrix}x_1^2\\ x_2^2\end{bmatrix}\Bigg]\\F(\calp)[x_2^2]\end{bmatrix},
$$
and define the mapping $\mathcal{G}(\theta)=\col(\theta_1, \theta_1^\alpha, \theta_2)$.

On the other hand, the standard parameterization of Proposition \ref{pro2} is computed with
\begalis{
	{\bf Y}&= \calp F(\calp)[x] \\
	{\bm \Omega}^\top  &=F(\calp)\Bigg[\begmat{-x_2 u + E &0 &0\\ 0 &x_1u&-x_2 }\Bigg],
}
with the overparameterized vector $\Theta= \col\Big({1 \over \theta_1},{1 \over \theta_1^\alpha},{\theta_2 \over \theta_1^\alpha}\Big)$.

For the simulations we consider the adaptive controller $u=\beta(x,\hat \theta)$,  where the estimate $\hat\theta$ is generated either by the G+D estimator of Lemma \ref{lem1} or from a standard gradient estimator for the overparameterized LRE \eqref{lres}. That is,
\begequ
\lab{graest}
\dot {\hat \Theta}=\gamma	{\bm \Omega} (	{\bf Y} - {\bm \Omega}^\top \hat \Theta).
\endequ
The parameter values of the system are $\theta_1=1,\theta_2=1.5, E=15, \alpha=2$, $k_p=10$ and $\kappa=15$.

The gains of the G+D estimator and filter constant were taken as $\gamma_g=100$, $\gamma=50$ and $\lambda=10$, respectively. The results of the simulation are shown in  Fig. \ref{sims1}. As seen from the figures the parameter estimates converge to their respective (Fig. \ref{fig2}) values and  $x_2(t) \to\kappa=15$ (Fig. \ref{fig3}), as desired.

In contrast with the situation above if we adopt the standard linear parameterization of Subsection \ref{subsec32} and apply the gradient estimator \eqref{graest} to the overparameterized LRE the resulting adaptive controller will fail. This comes from the fact that the regulation task requires that $x\to x_\star$. Therefore, from \ref{beta2}, $u\to u_\star=\frac{E}{\kappa}$. Since $x$ and $u$ converge to a constant, then $\Omega\to \Omega_\star$---that is, $\Omega$ converges to a constant as well. Thus, $\Omega$ is not PE.%For this reason, imposing $u$ to be IE (which is less restrictive) as only required by the new estimator (18) alliviates this issue.

The simulation results of this second scenario are shown in Fig. \ref{sims2} for estimator gain $\gamma=30$ and the remaining parameters selected as before. In Fig. \ref{fig4}, the estimation perfomance is shown for $\Theta$. Zooming in the plot of the estimated parameters in this figure, it can be seen that $\hat \Theta(t) \to \col(1.00, 0.17, 0.25)$, however, $\Theta=\col(1.00, 1.00, 1.50)$. That is, the estimation is deficient since $\hat{\Theta}$. This lead to an erroneous estimate of $\theta_2$ and, as consequence of that, the regulation performance is poor. This is evident in Fig. \ref{fig5} where $x_2$ is not driven to its setpoint $\kappa=15$.

	%%%%%%%%%%%%%%%%%
	\section{Conclusions}
	\label{sec6}
	%%%%%%%%%%%%%%%%%
	%
	We have presented in the paper a procedure to identify the parameters of nonlinear, nonlinearly parameterized, dissipative systems of the form \eqref{sys}. The method is based on the power balance equation of the system \eqref{powbal}, avoiding in this way the messy computations and stringent excitation requirements related with the standard parameterization of the systems vector field and input matrix given in \eqref{linparsys}. Invoking the G+D parameter estimator proposed in \cite{WANetal}, which ensures global exponential convergence of the parameter errorunder very weak regressor excitation assumptions, we proposed an indirect adaptive controller that guarantees UAS of the closed-loop system.
	
	Adaptive control was one of the main research topics in control  from the 70s to the mid-90s. The development and analysis of the problem's many and varied solutions over all these years, unquestionably played an absolutely major role in guiding us to our present understanding of control theory in general.  Although many critical issues remained open, a large part of the control community moved away from the field. Partly responsible for this unfortunate situation was the deviation of the problem formulation from a self-tuning procedure to the---more mathematically tractable but of  little practical relevance---\textit{ stabilization technique}. 
	
	For at least two reasons it is reasonable to expect a renewed interest in adaptive control and identification theories in the near future.  On one hand, to comply with the increasing performance requirements imposed to modern control systems it is necessary to develop efficient controller tuning procedures, that a well-formulated, adaptive control theory can provide. On the other hand, in recent years we have witnessed an explosion of references to the hyped-up \textit{ artificial intelligence} field, which is simply  the application of a neural network-based structure to a massive collection of data, whose success \textit{ in some particular applications} has been widely publicized. Obviously, the interest for a scientific theoretical field of ``procedures that work in some examples" is highly questionable. Because of the prevalence of nonlinear parameterizations in neural networks, very little theoretical understanding is available on adaptive neural networks---a situation that was already denounced 25 years ago \cite{ORT} and is still prevalent. Development of solid theoretical foundations are essential to turn this tide, an endeavour where adaptive control should play a central role.

%\rom{La ref [14] sigue estando mal, no es Ibt sino Int. Siguen habiendo espacios inutiles antes de los puntos en muchas ref.}

\bibliography{bibliography}
%
%%%%%%%%%%%%%

\end{document}